\def\<{\langle}
\def\>{\rangle}
\def\compo{}
\newcommand{\be}{\begin{eqnarray} \begin{aligned}}
\newcommand{\ee}{\end{aligned} \end{eqnarray} }
\newcommand{\benn}{\begin{eqnarray*} \begin{aligned}}
\newcommand{\eenn}{\end{aligned} \end{eqnarray*} }
\newcommand{\ben}{\begin{eqnarray} \begin{aligned}}
\newcommand{\een}{\end{aligned} \end{eqnarray} }
\newcommand{\bc}{\begin{center}}
\newcommand{\ec}{\end{center}}
\newcommand{\id}{\mathbb{1}}
\newcommand{\tr}{\mathop{\mathsf{tr}}\nolimits}
\newcommand{\e}{\mathrm{e}}
\newcommand{\beq}{\begin{eqnarray} \begin{aligned}}
\newcommand{\eeq}{\end{aligned} \end{eqnarray} }
\newcommand{\bea}{\begin{array}}
\newcommand{\eea}{\end{array}}
\newcommand{\bee}{\begin{enumerate}}
\newcommand{\eee}{\end{enumerate}}
\newcommand{\bei}{\begin{itemize}}
\newcommand{\eei}{\end{itemize}}
\newtheorem{theorem}{Theorem}
\newtheorem{lemma}[theorem]{Lemma}
\def\id{\mathbb{1}}
\def\01{\{0,1\}}
\newcommand{\ket}[1]{|#1\rangle}
\newcommand{\bra}[1]{\langle#1|}
\newcommand{\proj}[1]{|#1\rangle\langle#1|}
\newcommand{\ketbra}[2]{|#1\rangle\langle#2|}
\def\<{\langle}
\def\>{\rangle}
\newtheorem*{rep@theorem}{\rep@title}
\newcommand{\newreptheorem}[2]{%
\newenvironment{rep#1}[1]{%
 \def\rep@title{#2 \ref{##1} (restatement)}%
 \begin{rep@theorem}}%
 {\end{rep@theorem}}}
\newtheorem{result}{Result}
\def\e{\mathrm{e}}
\def\i{\mathrm{i}}
\def\E{\mathcal{E}}
\begin{document}

\title{
Fluctuating work: From quantum thermodynamical identities to a second law equality}

\author{\'{A}lvaro M. Alhambra}
\affiliation{Department of Physics and Astronomy, University College London, Gower Street, London WC1E 6BT, United Kingdom}
\author{Lluis Masanes}
\affiliation{Department of Physics and Astronomy, University College London, Gower Street, London WC1E 6BT, United Kingdom}
\author{Jonathan Oppenheim}
\affiliation{Department of Physics and Astronomy, University College London, Gower Street, London WC1E 6BT, United Kingdom}
\author{Christopher Perry}
\affiliation{Department of Physics and Astronomy, University College London, Gower Street, London WC1E 6BT, United Kingdom}
\affiliation{Department of Mathematical Sciences, University of Copenhagen, Universitetsparken 5, 2100 Copenhagen, Denmark}

\begin{abstract}
We investigate the connection between recent results in quantum thermodynamics and fluctuation relations by adopting a fully quantum mechanical description of thermodynamics. By
including a work system whose energy is allowed to fluctuate, we derive a set of equalities which all thermodynamical transitions have to satisfy.
This extends the condition for maps to be Gibbs-preserving to the case of fluctuating work, providing a more general characterisation of maps
commonly used in the information theoretic approach to thermodynamics.
For
final states, block diagonal in the energy basis, this set of equalities 
%
are necessary and sufficient conditions for a thermodynamical state transition to be possible.
The conditions serves as a parent equation which
can be used to derive a number of results. These include writing the second law of thermodynamics as an equality featuring a fine-grained notion of the free energy.
It also yields a generalisation of the Jarzynski fluctuation theorem which holds for arbitrary initial states, and under the most general manipulations allowed
by the laws of quantum mechanics.
Furthermore, we show that each of these relations can be seen as
the quasi-classical limit of three fully quantum identities. This allows us to consider the free energy as an operator, and 
allows one to obtain more general and fully quantum fluctuation relations from the information theoretic approach to quantum thermodynamics.
%
\end{abstract}
\maketitle

\section{Introduction}
The second law of thermodynamics governs what state transformations are possible regardless of the details of the interactions. As such, it is arguably the law of physics with the broadest applicability, relevant for
situations as varied as subatomic collisions, star formation, biological processes, steam engines, molecular motors and cosmology.  For a system which could be placed in contact with a thermal reservoir at temperature $T$, the second law can be expressed as
an inequality of the form
\begin{equation}
  \langle w \rangle  \le F(\rho)-F(\rho')
 \end{equation}
where the free energy is $F(\rho)= \tr H\rho-T S(\rho)$, the entropy is $S(\rho)=-\sum_sP(s)\log P(s)$, with $P(s)$ the probability that the system
has energy level $\ket{s}$, $H$ is the Hamiltonian of the system, and $ \langle w \rangle$ is the average work done by the system when it transitions from $\rho$ to $\rho'$. The free energy
is a scalar, and can be thought of as an average quantity. Here, we will see that by thinking of the free energy as an operator or random variable, one can derive a much stronger classical version of the second law which is an equality
\begin{align}
  \left\langle \e^{\beta \left(f_{s'} -f_s +w \right)} \right\rangle = 1
\end{align}
in terms of a fine-grained free energy
\begin{align}
f_s=E_s+T \log P(s)
\end{align}
that can be considered as a random variable occurring with probability $P(s)$ and whose average value is the ordinary scalar free energy $F= \langle f_s \rangle$. Here, initial energy levels are given by  $E_s=\tr \proj{s} H$, 
while the energy levels $E_{s'}$ correspond to the final Hamiltonian $H'$.
Although the term $-T \log P(s)$ is not defined for $P(s)=0$, all its moments are. We will see that
this equality version of the second law can be thought of as a simple consequence of a much stronger family of equalities and quantum identities. We will also see that if we Taylor expand the exponential in the above equality, we obtain not only the standard inequality version of the second law, but in addition, an infinite set of higher order inequalities. These can be thought of as corrections to the standard inequality.

This second law equality is valid for transitions between any two states as long as the initial state is diagonal in the energy eigenbasis, and when work is considered as the change in energy of some work system or {\it weight}. As such, although it is a greatly strengthened form of the second law, it is of a more classical nature, reminiscent of the fluctuation theorems of Jarzynski~\cite{jarzynski1997nonequilibrium} and Crooks~\cite{crooks1999entropy} where it is required that a measurement is performed on the initial and final state.  We are however, not only able to get a more general classical version of the Jarzynski fluctuation theorem, valid for any initial state, but we are also able to derive two fully quantum identities which reduce in the classical limit to these classical generalisations of the Jarzynski equation and the second law.

What's more, even our classical fluctuation theorems are derived from a fully quantum mechanical point of view, and are thus valid for any quantum process. Previous derivations, assumed a particular form of Markovian classical trajectories
(e.g. assuming a Langevin equation or classical trajectory in the context of classical stochastic thermodynamics) \cite{crooks1999entropy,seifert2012stochastic,seifert2005entropy,chernyak2006path,hatano2001steady,sagawa2010generalized,schumacher-eth}.
 A quantum mechanical derivation of the standard
Jarzynski equation has been done in numerous works \cite{tasaki2000jarzynski,campisi2009fluctuation,talkner2009fluctuation,campisi2011colloquium,albash2013fluctuation,rastegin2014jarzynski},
but in these one is usually limited to initial thermal states, and also must resort to energy measurements on the system in order to define work.
A derivation of fluctuation theorems for classical trajectories between arbitrary initial and final states has been performed in the case of erasure and a degenerate Hamiltonian \cite{schumacher-eth}. A derivation with
arbitrary initial and final states was also undertaken in \cite{manzano2015nonequilibrium}, for a family of maps which 
go slightly beyond the classical case.  

Here, we adopt a fully general and quantum mechanical treatment and derive a fully quantum identity, which reduces to the generalised fluctuation relation
\begin{equation}
  \left\langle 
  \e^{\beta( w -f_s)} \right\rangle
  = Z'_S
\end{equation}
when the initial state is diagonal in the energy eigenbasis. We call such states, i.e. those that satisfy $[\rho_S,H_S]=0$, {\it quasi-classical states}, 
and the above fluctuation relation 
is valid for arbitrary initial and final states of this form and for any quantum thermodynamical process.
When the initial state is thermal, we further have $\e^{-\beta f_s}  = Z_S$ for all $s$ which gives Jarzynski's equation in its usual form
\begin{equation}
  \left\langle 
  \e^{\beta w} \right\rangle
  = \frac {Z'_S} {Z_S}
  \ .
\end{equation}

Our two quantum identities, which reduce to the equality version of the second law and the generalisation of the Jarzynski equation valid for arbitrary initial quasi-classical states, 
can be considered as two independent children of a third, more powerful,
quantum identity 
\begin{equation}
  \label{eq:gibbs-stoch-q}
  {\rm tr}_{W}\! \left[
  \left(\mathcal J_{H'_S +H_W} \compo 
  \Gamma_{SW} \compo \mathcal J^{-1}_{H_S +H_W}
  \right) (\id_S \otimes \rho_{W})
  \right]
  = \id_{S}
\end{equation}
where $\rho_W$ is the initial state of the weight system, $\id_{S}$ is the identity on the system $S$ of interest, $\Gamma_{SW}$  is the completely positive trace preserving map
acting on the joint state
of system and weight which gives its evolution, and we define, as in \cite{aberg2016fully} (but with opposite sign convention),
$
  \mathcal J_{H} (\rho) = 
  e^{\frac \beta 2 H}  
  \rho\,
  e^{\frac \beta 2 H}  
$.

This parent identity can easily be used to derive
a fully quantum, general Jarzynski Equation for arbitrary states
(Result~\ref{re:qje} in Section \ref{sec:quantumidentities}).
When the input is quasi-classical, it reduces to our generalised Jarzynski Equation for arbitrary initial quasi-classical states.
Likewise, the parent identity gives a fully quantum version of the second law equality, Result~\ref{re:qsl} in Section \ref{sec:quantumidentities},
which reduces to the equality version of the second law
when the initial state is quasi-classical.

Now, it is natural to ask what the parent identity, Eq. \eqref{eq:gibbs-stoch-q}, reduces to for quasi-classical states. While Eq. \eqref{eq:gibbs-stoch-q} must necessarily be fulfilled by all thermodynamical processes on quantum states, on quasi-classical states it leads to the following necessary \emph{and sufficient} condition for transition probabilities to be realizable through thermal processes:
\begin{equation}
  \label {eq:gibbs-stoch-qc}
  \sum_{s, w} P(s',w|s)\, 
  \e^{\beta (E_{s'}-E_s +w)}
  =  1\ ,
\end{equation}
for all $s'$, where  $P(s',w|s)$ is the conditional probability of the final state having energy levels $E_{s'}$, and work $w$ being done by the system, given that the initial state had energy level $E_s$. 
This turns out to be the extension of an important equation from the {\it resource theoretic} approach to quantum 
thermodynamics which finds its origin in ideas 
from quantum information theory \cite{janzing2000thermodynamic,uniqueinfo,dahlsten2011inadequacy,del2011thermodynamic,brandao2011resource,horodecki2013fundamental,aaberg2013truly,brandao2013second,faist2015minimal,skrzypczyk2014work,egloff2015measure, aberg2014catalytic,lostaglio2015description,cwiklinski2015limitations,lostaglio2014quantum, korzekwa2016extraction,halpern2016beyond,woods2015maximum,ng2015limits,halpern2015introducing,wilming2016second, lostaglio2015stochastic, narasimhachar2015low,YungerHalpern15,gallego2015defining,perry2015sufficient,yunger2016microcanonical,goold2016role,vinjanampathy2016quantum}). 

An overarching idea behind the information theoretic approach is to precisely define what one means by thermodynamics, and thus consider which possible interactions are allowed between a system, a heat bath, and a work storage device, while systematically accounting for all possible resources used in the process. This leads to a definition of thermodynamics known as
Thermal Operations (TO) \cite{Streater_dynamics,janzing2000thermodynamic, horodecki2013fundamental}. This, and its catalytic version~\cite{brandao2013second}, represent the most an experimenter can possibly do when manipulating a system without access to a reservoir of coherence (although one can easily include a coherence reservoir as an ancilla as in \cite{brandao2011resource,aberg2014catalytic, korzekwa2016extraction}). It is thus the appropriate class of operations  
for deriving limitations such as a second law. However, they can be realised by very coarse grained control of the system, and thus also represent achievable thermodynamical operations \cite{perry2015sufficient}.

They also include the allowed
class of operations considered in fluctuation theorems, which include arbitrary unitaries on system and bath. We explain this inclusion in Appendix \ref{app:equiv}. Thermal Operations are thus broad enough to encompass commonly considered definitions of thermodynamics (see \cite{brandao2011resource} for further discussion on this point), including those used in the context of fluctuation relations.

Eq.~\eqref{eq:gibbs-stoch-qc} turns out to completely characterise thermodynamics in the case of fluctuating work.
In information theory, an important class of maps are those which satisfy
the  doubly-stochastic condition, i.e. preservation of the maximally mixed state. In thermodynamics, when there is no work system, any operation must instead preserve the Gibbs state. Equation \ref{eq:gibbs-stoch-qc} is an extension of this
condition to the case where there is a work system which is allowed to fluctuate.
Taking Eq.~\eqref{eq:gibbs-stoch-qc} with $w=T \log{\frac{Z_S}{Z'_S}}$ gives the Gibbs-preserving condition derived in \cite{janzing2000thermodynamic,Streater_dynamics}. We will show that Eq.~\eqref{eq:gibbs-stoch-qc}
provides a necessary and sufficient condition for thermodynamical transitions between states which are diagonal in the energy eigenbasis, and as a result is a necessary and sufficient condition for work fluctuations.
Using a generalisation of a theorem of Hardy, Littlewood, and Polya \cite{ruch1978mixing}, the condition of Gibbs-preservation was shown in \cite{renes2014work}
to be equivalent to the set of thermodynamical second laws which have recently been proven to be necessary and sufficient condition for quantum thermodynamical state transformations~\cite{horodecki2013fundamental} (c.f. \cite{ruch1978mixing}), 
the so-called thermo-majorization criteria~\cite{ruch1976principle,horodecki2013fundamental}. The latter are conditions on the initial probabilities $P(s)$ and final probabilities $P(s')$ under which one state can be transformed into another. 

Previously, in the resource theory approach, the work storage system had to be taken to be part of the system of interest~\cite{horodecki2013fundamental}.
Using this technique, one can compute the minimal amount of deterministic
work required to make a state transformation~\cite{horodecki2013fundamental} using thermo-majorization. One can also consider
\emph{fluctuating} or \emph{probabilistic} work from the resource theoretic perspective, but thus far, only average work
has been computed \cite{brandao2011resource,skrzypczyk2014work}.
Unresolved, thus far, has been the question of how the information theoretic paradigm fits in with the fluctuation theorem approach. Some further insights have been obtained in attempting to link
the information theoretic approach with the fluctuation theorem approach~\cite{salek2015fluctuations,halpern2015introducing,dahlsten2015equality}, however, how the two paradigms fit together has remained an open question.

Here, we see that one can incorporate fluctuating work explicitly in the resource theoretic approach through Eqs. \eqref{eq:gibbs-stoch-q} and \eqref{eq:gibbs-stoch-qc}. These serve to bring the field of fluctuation theorems fully
into the domain of the information and resource theoretic approach. This is possible because the class of operations which are allowed in the
fluctuation theorem paradigm lies within Thermal Operations.  The latter approach is also able to incorporate not only fluctuations of work, but also of states~\cite{alhambra2015probability,renes2015relative}, 
and we here aim to extend its use to further physically motivated situations.

Finally, it is interesting to compare the power of the relations presented here, with the Jarzynski and Crooks' relations. We do this for one of the simplest
examples, the process of Landauer erasure~\cite{Landauer}, where a bit in an unknown state is erased to the $0$ state. Since the initial state is thermal, one can apply the Jarzynski equality in its standard form.
However, even in this simple case, we find that the new equalities proven here give more information than the
standard Jarzynski and Crooks, in part because one has an independent equality for each possible final energy state.
One finds a number of additional insights.
Namely, (i) that one needs very large work fluctuations that grow exponentially as the probability that the erasure fails decreases -- the more perfect we demand our erasure to be, the larger the work fluctuations; (ii) it is impossible to even probabilistically extract work in a perfect erasure process; and (iii) that the optimal average work cost for perfect erasure of $T\log{2}$ is only achieved when the work fluctuations associated with successful erasure tend to zero. While these facts are known for protocols that are thermodynamically reversible, we know of no proof that they hold for arbitrary protocols. This simple application is discussed in the Conclusion.

The remainder of the paper is structured as follows: in Section \ref{sec:generalized} we define what we consider to be thermodynamics -- namely the set of Thermal Operations in the presence of fluctuating work. This involves three simple conditions on the type of operations which can be performed and we 
find some general constraints they need to obey.
In Section \ref{sec:quantumidentities} we introduce the three fully quantum thermodynamic identities and prove them. In Section \ref{sec:classical}
we show that in the case of states which are diagonal in their energy eigenbasis, these quantum identities each reduce to the
equality version of the second law, a generalisation of the Jarzynski equation, and the extension of the
Gibbs-preservation condition to the case of fluctuating work. In Section \ref{sec:implications} we discuss the implications of our results on determining conditions for state transformations to be possible.
In Section \ref{sec:cq-identities} we show that in the case of the initial state being diagonal in the energy basis and the final state being arbitrary,
the quantum identities reduce to constraints on the expectation values of certain operators with a clear physical interpretation.

\noindent
{\bf Related work:} While this research was nearing completion, \cite{aberg2016fully} appeared on the arXiv. There, a fully quantum Crooks-type identity was derived. This gives a constraint on the quantum state of the 
weight depending on both the
evolution and the time-reversed evolution. As our constraints are on both the system and weight, the two results appear to complement each other without overlap. We relate the two results by proving
a quantum analog of the Crooks relation, of similar form to that in  \cite{aberg2016fully} but applying not just to the weight, but to the system and weight. This is done in Section \ref{sec:crooks}. Some of the other results there can be derived in our framework as well. 

\section{Thermal operations with fluctuating work}\label{sec:generalized}

First, let us characterise the type of process/operation that we consider, and show that they are suitably general and implementable to encompass what is commonly considered to be thermodynamics. Our setting consists of a system with Hamiltonian $H_S$, a bath with Hamiltonian $H_B$ initially in the thermal state $\rho_B = \frac 1 {Z_B} \e^{-\beta H_B}$, and an ideal weight with Hamiltonian $H_W=  \int_\mathbb{R} dx\, x \ket{x}\! \bra{x}$, where the orthonormal basis $\{|x\rangle, \forall\, x\in \mathbb R\}$ represents the position of the weight. The operations we consider will allow for the Hamiltonian to change
as we shall see in Subsection \ref{ss:toH}.
Any joint transformation of system, bath and weight is represented by a Completely Positive Trace Preserving (CPTP) map $\Gamma_{SBW}$. 
We only consider maps $\Gamma_{SBW}$ satisfying the following conditions:
\begin{description}
  \item[Unitary on all systems]
  It has an (CPTP) inverse $\Gamma_{SBW}^{-1}$, which implies unitarity: $\Gamma_{SBW} (\rho_{SBW}) = U\rho_{SBW} U^\dagger$.
  
  \item[Energy conservation] 
  The unitary commutes with the total Hamiltonian: $[U,H_S +H_B +H_W] =0$.

  \item[Independence of the ``position" of the weight] The unitary commutes with the translations on the weight: $[U, \Delta_W]=0$.
  
\end{description}
Here $\Delta_W$ is the generator of the translations in the weight and canonically conjugate to the position of the weight $H_W$, that is $[\Delta_W, H_W] = i$. Note that these constraints allow for processes that exploit the coherence of the weight, as in~\cite{skrzypczyk2014work, masanes2014derivation}. We expand on how this in Appendix \ref{app:equiv}, where we show that such coherence can allow us to implement arbitrary unitaries on system and bath.

Both unitarity and energy conservation are fundamental laws of nature, so imposing them is a necessary assumption. Any process which
appears to violate energy conservation in the above sense is in fact energy conserving, one is merely tracing out or ignoring a system which is taking or giving up energy.
For example, turning on an interaction Hamiltonian between system, bath and weight can be done via a coherent ancilla as demonstrated in \cite{brandao2011resource}.  One can generate other couplings between the three systems via
the unitary $U$.
Imposing energy conservation on the systems considered, thus ensures that all sources of energy are properly accounted for. Note that while we require that the total process is unitary on the systems, weight and bath,
the operation on system and weight alone will usually not be.

The last condition, independence of the weight position, implies that the reduced map on system and bath $\Gamma_{SB}$ is a mixture of unitaries (Result~1 in~\cite{masanes2014derivation}). Hence the transformation can never decrease the entropy of system and bath, which 
guarantees that the weight is not used as a resource or as an entropy sink. 
Independence of the position of the weight can be thought of as a definition of work~\cite{skrzypczyk2014work} and is used in both the information theoretic and fluctuation theorem approaches. In the latter case, the assumption is implicit, since the amount of work is taken to be the difference in energy between the initial and final system/bath. In other words, work
is taken to be a change in energy of either the work system (explicit), or change in energy of the system-bath (implicitly). Conservation of energy ensures that the implicit and explicit paradigms are equivalent. 
Work then is the change in energy of the work system, and does not depend on how much energy is currently stored there, hence the unitary must commute with its translations. In the Appendix, we discuss the connection between different paradigms in more detail, and in particular, show that Thermal Operations is sufficiently general to include the the paradigms typically considered in the context of fluctuation relations.


\subsection{Thermal operations with non-constant Hamiltonian}
\label{ss:toH}

Thermal operations are general enough to include the case where the initial Hamiltonian of the system $H_S$ is different than the final one $H'_S$. This is done by including an additional qubit system $X$ which plays the role of a switch (as in~\cite{horodecki2013fundamental}).
Now the total Hamiltonian is
\begin{equation}
  H = H_S\otimes \ket{0}_X
  \bra{0} + H'_S \otimes \ket{1}_X
  \bra{1} +H_B +H_W
\ ,
\end{equation}
and energy conservation reads 
$[V,H] =0$, where $V$ is the global unitary when we include the switch.
We impose that the initial state of the switch is $\ket{0}_X$ and the global unitary $V$ performs the switching
\begin{equation}
  \label{SS}
V \left( 
\rho_{SBW} \otimes \ket{0}_X \bra{0}\right)  V^\dagger = \rho'_{SBW} \otimes \ket{1}_X \bra{1} 
\ ,
\end{equation}
for any $\rho_{SBW}$.
This implies
\begin{equation}
  V=  U
  \otimes\ket{1}_X \bra{0} +
  \tilde U
  \otimes\ket{0}_X \bra{1}
  \ ,
\end{equation}
where $U$ and $\tilde U$ are unitaries on system, bath and weight. The condition $[V,H] =0$ implies
\begin{equation}
  \label{EC}
  U (H_S+H_B+H_W) =
  (H'_S+H_B+H_W) U
  \ .
\end{equation}
Therefore, the reduced map on system, bath and weight can be written as
\begin{equation}
  \label{rmm}
  \Gamma_{SBW} (\rho_{SBW}) =
  U \rho_{SBW} 
  U^\dagger\ ,
\end{equation}
where the unitary $U$ does not necessarily commute with $H_S+H_B+H_W$ nor $H'_S+H_B+H_W$ but satisfies Eq.~\eqref{EC}.


\section{Identities for quantum thermal operations}
\label{sec:quantumidentities}

In this section we derive some fully quantum equalities for Thermal Operations with fluctuating work. In the next section we shall provide the physical meaning of these equalities. 
Thus far, from the information theoretic perspective, some quantum constraints on state transformations are known
i.e. constraints on transformations of the coherences over energy 
levels~\cite{brandao2013second,aberg2014catalytic,lostaglio2015description,cwiklinski2015limitations,lostaglio2014quantum,korzekwa2016extraction}, but none of these constraints apply in the case of fluctuating work.
On the other hand, in the fluctuation theorem approach, no quantum relations are known -- one always assumes that the initial and final states are measured in the energy eigenbasis, thus one is only considering transitions between quasi-classical states.

In what follows we are mostly interested in the joint dynamics of system and weight, which is described by the CPTP map
\begin{equation}
  \label{def Gamma SW}
  \Gamma_{SW} (\rho_{SW}) =
  {\rm tr}_{B} \!
  \left[U
  \left( \rho_{SW} \otimes \frac {e^{-\beta H_B}} {Z_B} 
  \right) U^\dagger \right] .
\end{equation}
It is convenient to define the CP (but not TP) map
\begin{equation} \label{eq:jop}
  \mathcal J_{H} (\rho) = 
  e^{\frac \beta 2 H}  
  \rho\,
  e^{\frac \beta 2 H}  
  \ ,
\end{equation}
whose inverse is
\begin{equation} \label{eq:jop2}
  \mathcal J^{-1}_{H} (\rho) = 
  e^{-\frac \beta 2 H}  
  \rho\,
  e^{-\frac \beta 2 H}  
  \ .
\end{equation}
Using Eqs.~\eqref{EC} and~\eqref{rmm} we obtain
\begin{eqnarray}
  \nonumber &&
  \left(\mathcal J_{H'_S +H_W} 
  \compo \Gamma_{SW} \compo 
  \mathcal J^{-1}_{H_S +H_W}
  \right) (\id_{SW}) 
  \\ \nonumber &=& 
  \mathcal J_{H'_S +H_W}\! \left(
  {\rm tr}_{B}\! \left[ 
  U
  \frac {\e^{-\beta(H_S+H_B+H_W)}} {Z_B} 
  U^\dagger
  \right]\right) 
  \\ \nonumber &=& 
  \mathcal J_{H'_S +H_W}\! \left(
  {\rm tr}_{B}\! \left[ 
  \frac {\e^{-\beta(H'_S+H_B+H_W)}} {Z_B} 
  \right]\right) 
  \\ \label{eee9} &=& 
  \mathcal J_{H'_S +H_W}\! \left(
  \e^{-\beta(H'_S+H_W)}\right) 
  = 
  \id_{SW}\ .
\end{eqnarray}
As mentioned in the previous section, it was proven in~\cite{masanes2014derivation} that the condition $[U, \Delta_W]=0$ implies
${\rm tr}_{W}\! \left[ U\left(\id_{SB} \otimes \rho_W\right) U^\dagger \right] =\id_{SB}$ for any state $\rho_W$.
Proceeding similarly as in Eq.~\eqref{eee9} we obtain
\begin{eqnarray}
  \nonumber &&
  {\rm tr}_W\!
  \left(\mathcal J_{H'_S +H_W} 
  \compo \Gamma_{SW} \compo 
  \mathcal J^{-1}_{H_S +H_W}
  \right) (\id_{S} \otimes \rho_W) 
  \\ \nonumber &=& 
  {\rm tr}_W
  \mathcal J_{H'_S +H_W}\! \left(
  \frac 1 {Z_B}
  {\rm tr}_B\! \left[ 
  U \mathcal J^{-1} _{H_S+H_B+H_W}
  (\id_{SB} \otimes \rho_W)U^\dagger
  \right]\right) 
  \\ \nonumber &=& 
  {\rm tr}_W
  \mathcal J_{H'_S +H_W}\! \left(
  \frac 1 {Z_B}
  {\rm tr}_B\! \left[ 
  \mathcal J^{-1} _{H'_S+H_B+H_W}
  (U\id_{SB} \otimes \rho_W U^\dagger)
  \right]\right) 
  \\ \nonumber &=& 
  {\rm tr}_{BW}\! \left(   
  \frac {\e^{-\beta H_B}} {Z_B} 
  U 
  (\id_{SB} \otimes \rho_W)U^\dagger
  \right) 
  \\ \label{eee10} &=& 
  {\rm tr}_{B}\! \left(   
  \frac {\e^{-\beta H_B}} {Z_B} 
  \, \id_{SB}\right) 
  = 
  \id_{S}\ .
\end{eqnarray}
We thus have:
\begin{result}[Quantum Gibbs-stochastic]
  \label{R1}
If $\Gamma_{SW}$ is a thermal operation, then 
\begin{equation} \label{eq:R1}
  {\rm tr}_{W}\! \left[
  \left(\mathcal J_{H'_S +H_W} \compo 
  \Gamma_{SW} \compo \mathcal J^{-1}_{H_S +H_W}
  \right) (\id_S \otimes \rho_{W})
  \right]
  = \id_{S}
\end{equation}
for any initial state of the weight $\rho_W$.
\end{result}
This is a quantum extension of the Gibbs-preservation condition presented in~\cite{janzing2000thermodynamic,Streater_dynamics}.
The result generalises that in \cite{janzing2000thermodynamic,Streater_dynamics}, not only because it includes work, but also because it is fully quantum.
The details of the quasi-classical generalization to the case of fluctuating work are provided in Section \ref{sec:classical}.

Next, we use the identities $\mathcal J^{-1}_{T\ln \rho} (\rho) =\id$ and ${\rm tr}_S \left[\mathcal J_{T\ln \rho} (\id)\right] = 1$, which hold for any full-rank state $\rho$.
In the case where the initial state $\rho_S$ is not full rank, we can take the limit of a full-rank state.
Now, applying $\mathcal J_{T\ln\rho'_S}$ and taking the trace over $S$ on both sides of Eq.~\eqref{eq:R1} we obtain: 

\begin{result}[Quantum Second Law Equality]
\label{re:qsl}
If $\Gamma_{SW}$ is a thermal operation, then, for every pair of initial states $\rho_S, \rho_W$, we have
\begin{eqnarray}
  \nonumber
  {\rm tr}_{SW}\!\! \left[\left(
  \mathcal J_
  {T\ln\rho'_S} \compo 
  \mathcal J_
  {H'_S  +H_W} \compo 
  \Gamma_{SW} \compo \mathcal J^{-1}_{H_S +H_W}
  \compo \mathcal J^{-1}_{T\ln \rho_S}
  \right) \! 
  \left( \rho_S \otimes \rho_{W} \right)
  \right]
  \\
  = 1 \hspace{78mm}
\end{eqnarray}
where 
\begin{equation}
  \label {fs}
  \rho'_S = {\rm tr}_W\!
  \left[ \Gamma_{SW}
  (\rho_S\otimes \rho_W)
  \right]\ ,
\end{equation}
is the final state of the system.
\end{result}
The above result is a quantum generalization of the second law equality, which we will describe in Section \ref{sec:classical}.
Now, applying $\mathcal J^{-1}_{H'_S}$ and taking the trace over $S$ on both sides of Eq.~\eqref{eq:R1} we obtain a quantum generalization of the Jarzynski inequality for general initial states:

\begin{result}[Quantum Jarzynski Equality]
\label{re:qje}
If $\Gamma_{SW}$ is a thermal operation then
\begin{equation}
  {\rm tr}_{SW}\! \left[\left(
  \mathcal J_{H_W} \compo 
  \Gamma_{SW} \compo \mathcal J^{-1}_{H_S +H_W}
  \compo \mathcal J^{-1}_{T\ln \rho_S}
  \right) 
  \left( \rho_S \otimes \rho_{W} \right)
  \right]
  = Z'_{S}
\end{equation}
for every pair of initial states $\rho_S, \rho_W$.
\end{result}

\section{Identities for classical thermal operations}
\label{sec:classical}

We will now go from the fully quantum identities, to ones which are applicable for quasi-classical states (i.e. those considered in fluctuation theorems).
We thus consider the case where there is an eigenbasis $|s\rangle$ for $H_S$ and an eigenbasis $|s'\rangle$ for $H'_S$ such that
\begin{equation}
  \label{cl}
  \Gamma_{SW} 
  (|s\rangle\! \langle s| \otimes |0\rangle\! \langle 0|)
  =
  \sum_{s',w} P(s',w)
  |s'\rangle\! \langle s'|
  \otimes |w\rangle\! \langle w|\ ,
\end{equation}
where $|w\rangle$ are eigenstates of $H_W$.
Note that when $H_S$ or $H'_S$ are degenerate, they could have other eigenbases not satisfying the above.  We say that $\Gamma_{SW}$ is a process which acts on quasi-classical states.
Also, the ``independence of the position of the weight" allows us to choose its initial state to be $|0\rangle$ without loss of generality.
If we denote by $E_s$ and $E_{s'}$ the eigenvalues corresponding to $|s\rangle$ and $|s'\rangle$, then we can write $\mathcal J_{H_S} (|s\rangle\! \langle s|) = \e^{\beta E_s} |s\rangle\! \langle s|$ and $\mathcal J_{H'_S} (|s'\rangle\! \langle s'|) = \e^{\beta E_{s'}} |s'\rangle\! \langle s'|$.

When Eq.~\eqref{cl} holds, we can represent the thermal operation $\Gamma_{SW}$ by the stochastic matrix
\begin{equation}
  P(s',w|s) = 
  {\rm tr}\! \left[  
  |s'\rangle\! \langle s'| \otimes |w\rangle\! \langle w|\, 
  \Gamma_{SW} 
  (|s\rangle\! \langle s| \otimes |0\rangle\! \langle 0|)
  \right]
  \ .
\label{eq:classicalTO}
\end{equation}
In such a case we have:
\begin{result}[Classical Gibbs-stochastic] $P(s',w|s)$ is a thermal operation mapping quasi-classical states to quasi-classical states if and only if
\label{re:CGS}
\begin{equation}
  \label{eqR4}
  \sum_{s,w} P(s',w|s)\,  
  \e^{\beta (E_{s'}-E_s +w)} = 1
\end{equation}
for all $s'$.
\end{result}

\begin{proof}
The proof of the {\it only if} direction follows simply by writing 
Result~\ref{R1} in terms of the matrix of Eq. \eqref{eq:classicalTO}. The {\it if} direction is proven as follows.
 Let us consider a bath with infinite volume in a thermal state at inverse temperature $\beta$. Without loss of generality, the energy origin of the bath can be chosen such that $\langle \mathcal E \rangle_\beta =0$. This and the fact that its heat capacity is infinite (due to the infinite volume) implies that the density of states $\Omega (\mathcal E)$ is proportional to $e^{\beta \mathcal E}$.

Due to energy conservation and invariance of the position of the weight, the joint map of system, bath and weight can be characterised by a map on system and bath $\pi: (s,b) \to (s',b')$ where $(s,b)$ and $(s',b')$ label pairs of system and bath energy levels. 
We construct the map $\pi$ from the given $P(s',w|s)$ in the following way. When the system makes the transition $s\to s'$, a fraction 
$P(s',w= {\cal E-E'} +E_s -E_{s'}|s)$ of the bath states with energy $\mathcal E$ are mapped to bath states with energy $\mathcal E'$, for all values of $\mathcal E$. 
Using the fact that the number of states with energy $\mathcal E$ is $\Omega(\mathcal E) = A\, e^{\beta \mathcal E}$ (for some constant $A$), we will now show that $\pi$ is a permutation.

The number of (final) states in the set $\{(s',b'): \mathcal E_{b'} =\mathcal E'\}$ is $\Omega (\mathcal E')$. And the number of (initial) states $(s,b)$ that are mapped to this set is
\begin{eqnarray*}
  &&
  \sum_{s,\mathcal E} 
  P(s',w= {\cal E-E'} +E_s -E_{s'}|s)\, 
  \Omega(\mathcal E)
  \\ &=&
  \sum_{s,w} 
  P(s',w|s)\, 
  A\,
  e^{\beta (\mathcal E_{s'} -E_s +w +\mathcal E')}
  \\ &=&
  \Omega (\mathcal E')\ ,
\end{eqnarray*}
where in the last line we have used Eq.~\eqref{eqR4}.
Therefore, it is possible to construct a permutation with the mentioned requirements.
\end{proof}

Note that Result~\ref{re:CGS} gives a necessary and sufficient condition that Thermal Operations with a fluctuating weight must satisfy for transformations between quasi-classical states, while the fully quantum Result~\ref{R1} is a necessary condition. 
This last point can be seen by considering an operation that is Gibbs preserving on the system and acts as $\id_W$ on the weight. This clearly satisfies Eq.~\eqref{eq:R1}, yet 
since Gibbs preserving operations are a larger class of operations that Thermal Operations~\cite{faist2015gibbs}, it need not be a Thermal Operation.

  The above is an extension of the Gibbs preservation condition~\cite{janzing2000thermodynamic,Streater_dynamics} to the case where thermodynamical work is included.
%
  When the Hamiltonian of the system does not change, setting $w=0$ in Result~\ref{re:CGS} reproduces the aforementioned result. We discuss the implications of this condition on state transformations in the next section.

In a similar fashion to the previous section, we can write the quasi-classical version of Result~\ref{re:qsl} as
\begin{equation}
  \sum_{s',s,w} P(s',w|s)\,  
  \e^{\beta (f_{s'}-f_s +w)} P(s) 
  = 1\ ,
\end{equation}
where we define the fine-grained free energies
\begin{eqnarray}
  f_s &=& E_s +\frac{1}{\beta}\ln P(s)\ ,
  \\
  f_{s'} &=& E_{s'} +\frac{1}{\beta}\ln P(s')\ .
\end{eqnarray}
In a more compact form:
\begin{result}[Classical Second Law Equality]
\label{R5}
A process on quasi-classical states that acts unitarily on the total system, conserves energy and is independent of the position of the weight satisfies
\begin{equation}
  \label{2le3}
  \left\langle
  \e^{\beta(f_{s'}-f_s +w)}
  \right\rangle =1\ .
\end{equation}
\end{result}
This result follows simply by using Eq. \eqref{eq:classicalTO} in Result~\ref{re:qsl} or directly from Result \ref{re:CGS}.

Due to the convexity of the exponential,  this equality implies the standard second law
\begin{equation}
  \label{2le4}
  \langle f_{s'}-f_s +w \rangle
  \leq 0\ .
\end{equation}
But Eq.~\eqref{2le3} is stronger, since it implies the following infinite list of inequalities
\begin{equation}\label{eq:infbounds}
  \sum_{k=1}^N
  \frac {\beta^k} {k!} 
  \left\langle \left(f_{s'} -f_s + w \right)^k \right\rangle 
  \leq 0\ ,
\end{equation}
where $N$ can be any odd number. 
Note that Eq.~\eqref{2le4} is the $N=1$ case. One can think of Eq. \eqref{eq:infbounds} as providing higher order corrections to the standard second law inequality. 
All the other inequalities have information about the joint fluctuations of $f_s, f_{s'}$ and $w$.
To prove Eq.~\eqref{eq:infbounds} we just note that the residue of the Taylor expansion of the exponential function to any odd order is always negative.

Next we proceed as in Result~\ref{R5}, and obtain the classical version of Result~\ref{re:qje}. Once again, this can be done either by substituting Eq. \eqref{eq:classicalTO} into Result~\ref{re:qje} or proceeding directly from Result~\ref{re:CGS}. 
\begin{result}[Classical Jarzynski Equality]
\label{R6}
A process on quasi-classical states that acts unitarily on the total system, conserves energy and is independent of the position of the weight satisfies
\begin{equation}\label{classJarz}
  \left\langle
  \e^{\beta(w-f_s)}
  \right\rangle 
  = Z'_S\ .
\end{equation}
\end{result}
Note that this version of the Jarzynski equation is valid for any initial state of the system, encoded in the fine-grained free energy $f_s$. For the particular case where the initial state is thermal, we have $\e^{-\beta f_s} = Z_S$ for all $s$, which implies the standard Jarzynski Equality
\begin{equation}
  \left\langle
  \e^{\beta w}
  \right\rangle 
  = \frac {Z'_S} {Z_S}\ .
\end{equation}

\section{Implications for conditions on state transformations} \label{sec:implications}

Result~\ref{re:CGS}, the extension of the Gibbs preserving condition to the case of fluctuating work, is a restriction on what maps are possible in thermodynamics. The Gibbs preserving condition, is likewise a generalisation of the condition for a map to be bistochastic (i.e. preserve the maximally mixed state). This is recovered when the initial and final Hamiltonians of the system are trivial $H_S=H_{S'}=0$, and we set  $w=0$  in Result~\ref{re:CGS}.

For the case of bistochastic maps $\Lambda$, the condition on the map is equivalent to a condition on what state transformations are possible. Namely, for two states $\rho$ and $\rho'$, $\rho'=\Lambda \left(\rho\right)$ if and only if $\rho'$ is majorised by $\rho$  \cite{hardy1952inequalities}. The majorisation condition is as follows. For eigenvalues of $\rho$ and $\rho'$ arranged in non-increasing order and denoted by $\lambda_s$, $\lambda_s'$, we say that  $\rho'$ is majorised by $\rho$ if and only if $\sum_{s=1}^k\lambda_s \geq  \sum_{s=1}^k\lambda_s' \,\, \forall k$. 

For non-trivial Hamiltonians such that $H_S=H_{S'}$ (but still setting $w=0$), Thermal Operations preserve the associated Gibbs state rather than the maximally mixed state. For such sets of allowed operations thermo-majorisation provides a set of conditions that relate initial states to achievable final states (see Figure \ref{fig:w_ss}). These can be considered as a refinement of the second law of thermodynamics, since they constrain the states to which some initial state can evolve under the laws of thermodynamics.


In the case of fluctuating work (no longer requiring that $w=0$), one can now ask whether it is possible to relate the condition on allowed maps given by Result \ref{re:CGS} to a condition akin to thermo-majorisation on the achievable states and work distributions. A simple, and fairly common case is where the values of work that occur are
labeled by $s$ and $s'$ and of the form $w_{ss'}=\alpha_{s'}-\gamma_s$. These can be readily related to the Gibbs-preservation condition of~\cite{janzing2000thermodynamic,Streater_dynamics} (note that processes such as Level Transformations as discussed in the Appendix  are examples of such transformations). For such work distributions, $P\left(s',w_{ss'}|s\right)$ is a thermal operation if and only if $P\left(s'|s\right)\equiv P\left(s',w_{ss'}|s\right)$ satisfy the Gibbs-preservation condition of~\cite{janzing2000thermodynamic,Streater_dynamics} but with the energy levels of the initial and final systems redefined so that the initial energy levels are $E_s+\gamma_s$ and the final are $E_{s'}+\alpha_{s'}$. Determining whether it is possible to convert a state $\rho$ into a state $\sigma$ while extracting work of the form $w_{ss'}=\alpha_{s'}-\gamma_s$ can be done using the thermo-majorization diagrams introduced in \cite{horodecki2013fundamental} as shown in Fig.~\ref{fig:w_ss}. Indeed, when $\alpha_{s'}=-E_{s'}$ and $\gamma_s=-E_s$, the problem reduces to the question of whether $\rho$ majorises $\sigma$.


\begin{figure}
\centering
\includegraphics[width=0.95\columnwidth]{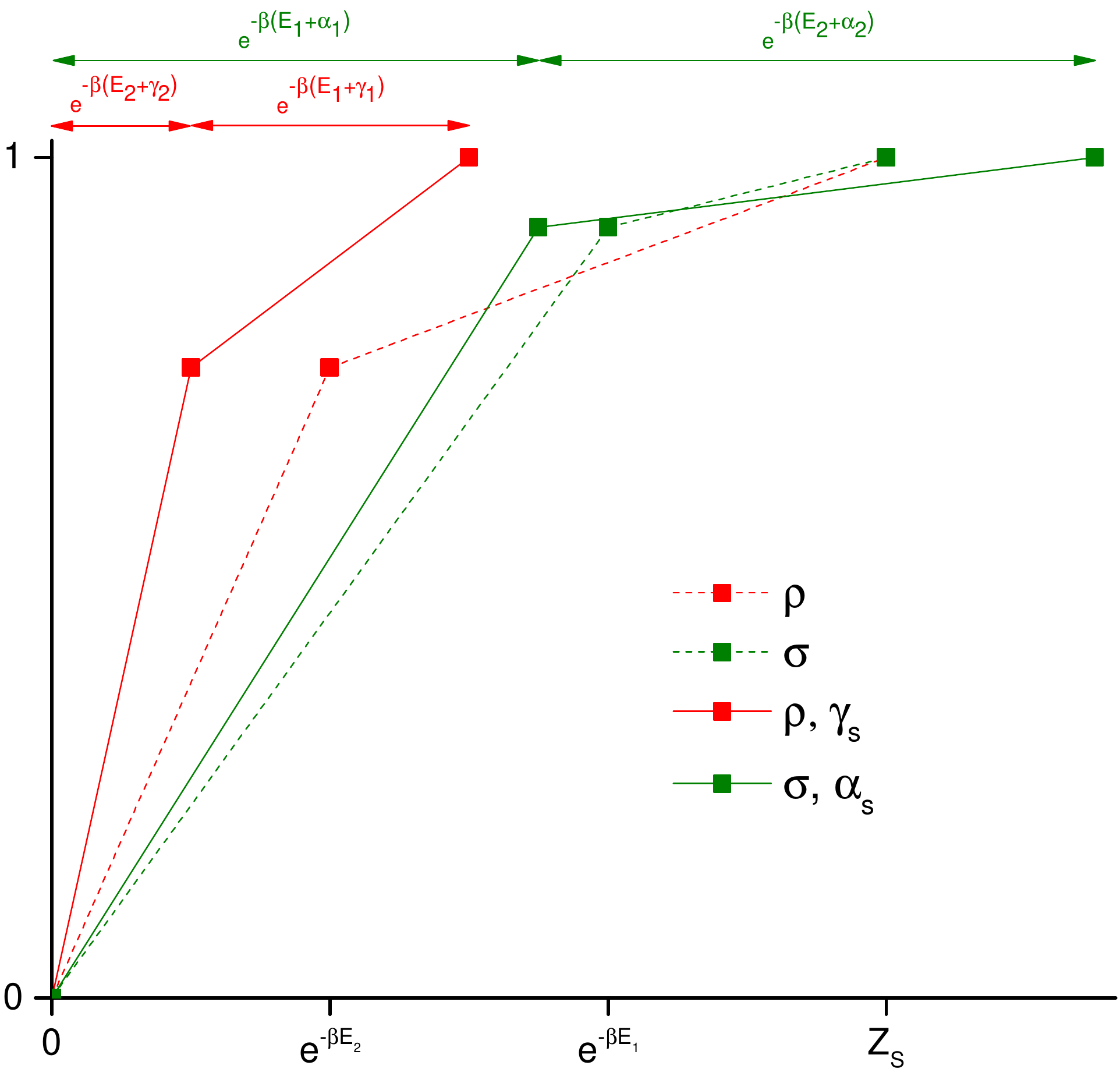}
\caption{Given a system in state $\rho=\sum_{s=1}^{n}p_s\ketbra{s}{s}$ with Hamiltonian $H_S=\sum_{s=1}^{n}E_s\ketbra{s}{s}$, its \emph{thermo-majorization diagram} (see \cite{horodecki2013fundamental} for more details) is formed by first relabeling the pairs of occupation probabilities and energy levels so that $p_1 e^{\beta E_1}\geq p_2 e^{\beta E_2}\geq\dots\geq p_n e^{\beta E_n}$ and then plotting the points $\left\{\sum_{s=1}^{k}e^{-\beta E_s}, \sum_{s=1}^{k}p_s\right\}_{k=1}^{n}$, joining them together to form a concave curve. The figure shows examples for a qubit. In the absence of a work storage system, $\left(\rho,H_S\right)$ can be transformed into $\left(\sigma,H_S\right)$ using a thermal operation if and only if the curve associated with $\rho$ is never below that of $\sigma$. In this example, the curve of $\rho$ crosses that of $\sigma$ so the transformation is not possible. When all values in a work distribution have the form $w_{ss'}=\alpha_{s'}-\gamma_s$, the existence of a thermal operation mapping a quasi-classical state a $\rho$ to quasi-classical state $\sigma$ while producing such a work distribution can be determined by considering the curves associated with $\left(\rho,\sum_{s=1}^{n}\left(E_s+\gamma_s\right)\ketbra{s}{s}\right)$ and $\left(\sigma,\sum_{s=1}^{n}\left(E_s+\alpha_s\right)\ketbra{s}{s}\right)$. In this example, the curve associated with $\rho$ and $\left\{\gamma_s\right\}$ lies above that of $\sigma$ and $\left\{\alpha_s\right\}$ so the transformation from $\rho$ to $\sigma$ is possible with respect to this work distribution. By adjusting $\gamma_s$ and $\alpha_s$ so that both curves are straight lines that overlap, one can make the average work of the transformation equal to the change in free energy and the transformation becomes reversible.} \label{fig:w_ss}
\end{figure}

\section{Classical-quantum identities}
\label{sec:cq-identities}

In classical physics no problem arises from writing joint expectations of observables for the initial and final states of an evolution. For example, this is done in Results \ref{re:CGS}-\ref{R6}.
In general, quantum theory does not allow for this, because a measurement on the initial state will disturb it, and then no longer will it be the initial state.
However, in the case where the measurement is non-disturbing on the initial state, the joint expectation is well-defined, independently of the measurement on the final state.

In what follows we analyze this case, by imposing that both the system and weight are initially quasi-classical.
We do not impose any constraint on the final state, but define its dephased version by
\begin{equation}
  \Delta'[\rho'_S] =
  \int dt\, \e^{-iH'_S t}
  \rho'_S\, \e^{iH'_S t}\ .
\end{equation}
This dephasing CPTP map projects $\rho'_S$ onto the subspace of Hermitian matrices that commute with $H'_S$.

If $\rho_S = \sum_s P(s) |s\rangle\! \langle s|$ is the spectral decomposition of the initial state, and $|x\rangle$ is an eigenstate of $H_W$, then
\begin{eqnarray}
  \nonumber &&
  (\mathcal J^{-1}_{H_S +H_W} 
  \compo 
  \mathcal J^{-1}_{T\ln \rho_S}) 
  (|s\rangle\! \langle s| \otimes 
  |x\rangle\! \langle x|)
  \\ &=&
  \e^{-\beta(E_s+T\ln P(s) +x)}
  (|s\rangle\! \langle s| \otimes 
  |x\rangle\! \langle x|)\ .
\end{eqnarray}
The following definitions of the {\it free energy operator} are used below
\begin{eqnarray}
  F_S &=& H_S + T\ln\Delta[\rho_S]\ ,
  \\
  F'_S &=& H'_S + 
  T\ln\Delta' [\rho'_S]\ .
\end{eqnarray}
If in the derivation of Result~\ref{re:qsl} we multiply Eq.~\eqref{eq:R1} by $\mathcal J_{T\ln\Delta\rho'_S}$ instead of $\mathcal J_{T\ln\rho'_S}$, we obtain
\begin{equation}
  {\rm tr}_{SW}\!\! \left[\left(
  \mathcal J_
  {F'_S  +H_W} \compo 
  \Gamma_{SW} \compo \mathcal J^{-1}_{F_S +H_W}
  \right) \! 
  \left( \rho_S \otimes \rho_{W} \right)
  \right]
  = 1
\end{equation}
where we have used that $\Delta[\rho_S] = \rho_S$.

Again, independence from the position of the weight allows us to choose $\rho_W = |0\rangle\! \langle 0|$. 
This enables us to write the above equality as
\begin{equation}
  \sum_{s} \e^{-\beta f_s} P(s)\,
  {\rm tr}_{SW}\!\! \left[
  \e^{\beta (F'_S + H_W)}\,
  \Gamma_{SW} \! 
  \left(|s\rangle\! \langle s| \!\otimes\! 
  |0\rangle\! \langle 0|
  \right)
  \right]
  =1
\end{equation}
or, equivalently:
\begin{result}[Classical-Quantum Second Law Equality]
\label{R7}
Consider a process that acts unitarily on the total system, conserves energy and is independent of the position of the weight . If the initial states of system and weight commute with the corresponding Hamiltonians, then
\begin{equation}
  \left\langle
  \e^{\beta F'_S}
  \e^{\beta W} 
  \e^{-\beta F_S}
  \right\rangle
  =1\ .  
\end{equation}
\end{result}

In the same way we have:
\begin{result}[Classical-Quantum Jarzynski Eq]
\label{R8}
\begin{equation}
  \left\langle
  \e^{\beta (W -F_S)}
  \right\rangle
  = Z'_S\ .  
\end{equation}
\end{result}

\section{A Quantum Crooks relation}
\label{sec:crooks}

Here we use our techniques to prove a fully quantum version of the Crooks relation, which is related to that proven in \cite{aberg2016fully} but on the weight and system.
We also derive a classical version directly from our generalised Gibbs-stochastic condition and without the need to assume micro-reversibility.

In relation to the map defined in Eq.~\eqref{def Gamma SW}, we can also define the associated \emph{backwards} CPTP map associated: 
\begin{equation}
  \label{def Theta SW}
  \Theta_{SW} (\rho_{SW}) =
  {\rm tr}_{B} \!
  \left[U^\dagger
  \left( \rho_{SW} \otimes \frac {e^{-\beta H_B}} {Z_B} 
  \right) U \right] .
\end{equation}
Like any CP map, this can be written in Kraus form
\begin{equation}
  \Theta_{SW} (\rho_{SW}) =
  \sum_k A_k \rho_{SW} A_k^\dagger\ .
\end{equation}
The dual of a map is defined as
\begin{equation}
  \Theta_{SW}^* (\rho_{SW}) =
  \sum_k A_k^\dagger \rho_{SW} A_k\ .
\end{equation}
A bit of algebra shows that
\begin{equation}
  \Theta_{SW}^* (\rho_{SW}) =
  {\rm tr}_{B} \!
  \left[\frac {e^{-\beta H_B}} {Z_B} U
  \left( \rho_{SW} \otimes \id_B
  \right) U^\dagger \right] ,
\end{equation}
from which it follows:
\begin{result}
The forward and backward maps, respectively $\Gamma_{SW}$ and $\Theta_{SW}$, are related via
\begin{equation}
  \mathcal J_{H'_S+H_W} \Gamma_{SW}
  \mathcal J_{H_S+H_W}^{-1/2}
  =
  \Theta_{SW}^*\ .
\end{equation}
\end{result}
This shows that the dual map is analogous to the transpose map that appears in various results of quantum information theory \cite{petz1986sufficient,barnum2002reversing}.
Note that using the classical version of generalised Gibbs-stochasity,
Result~\ref{re:CGS}, we 
can define the map
\begin{equation}
  P_{\rm back} (s,-w|s') = 
  P(s',w|s)\,
  \e^{\beta (E_{s'} -E_s +w)}
  \ .
\label{eq:back}
\end{equation}
One can check that constraint in Eq.~\eqref{eqR4} applied to $P\left(s',w|s\right)$ is equivalent to the normalization of $P_{\rm back} \left(s,-w|s'\right)$, and the normalization of $P\left(s',w|s\right)$ is equivalent to constraint in Eq.~\eqref{eqR4} applied to $P_{\rm back} (s,-w|s')$.  
This constraint implies that $P_{\rm back} (s,w|s')$ is a thermal operation, hence, there is a global unitary generating this transformation. It can also be seen that one can use the unitary that is the inverse of the one that generates $P(s',w|s)$ (although other unitaries may also generate the same dynamics on system and weight).   
$P_{\rm back}(s,-w|s')$ is thus the microscopic reverse of  $P(s',w|s)$. Indeed, by defining the probability of obtaining work $w$ in going from energy level $s$ to $s'$ when the initial state is thermal by $p_{\rm forward}(w,s',s)=P(s',w|s)e^{-\beta E_s}/Z_S$ for the forward process and $p_{\rm back}(-w,s',s)=P_{\rm back}(s',-w|s)e^{-\beta E_{s'}}/Z'_S$ for the reverse, we obtain a Crooks relation
\begin{align}
\frac{p_{\rm forward}(w,s,s')}{p_{\rm back}(-w,s,s')}=e^{-\beta w} \frac {Z'_S} {Z_S}
\label{eq:ccrooks}
\end{align}
without needing to assume micro-reversibility, which is the starting assumption of \cite{crooks1999entropy,crooks2000path}. One can take $p_{\rm back}(-w,s,s')$ to the RHS of Eq. \eqref{eq:ccrooks} and then sum over $s$ and/or $s'$ to obtain the more standard Crooks relation
\begin{align}
\frac{p_{\rm forward}(w)}{p_{\rm back}(-w)}=e^{-\beta w} \frac {Z'_S} {Z_S}
\end{align}
but  Eq. \eqref{eq:ccrooks} is clearly stronger. 

\section{Conclusion}
In this paper we considered thermodynamical operations between a system, a thermal bath and a weight from which one can extract work in a probabilistic way. From a small set of physically motivated assumptions one can show that these operations obey an identity on arbitrary states from which a number of new, or more general equalities can easily be found. The equalities are both of a fully quantum and of a classical nature. One of these, the second law as an equality, is of a much stronger form
than the standard second law.
For example, the saturation of the second-law inequality
\begin{equation}\label{eq:saturation}
  \left\langle f_{s'} -f_s + w \right\rangle = 0\ ,
\end{equation}
implies
\begin{equation} \label{eq:thermrev}
  w = f_s -f_{s'}\ 
  \mbox{ for all } s,s'.
\end{equation}
This regime is called \emph{thermodynamically reversible}, 
and provides the optimal consumption or extraction of work when we take its average $\langle w \rangle$ as the figure of merit.

Outside of the thermodynamically reversible regime, violations of
\begin{equation}
  f_{s'} -f_s + w 
  \leq 0\ ,
\end{equation}
for individual realizations of the process $(s,s',w)$ can occur. Defining the \emph{excess} random variable $v = f_{s'} -f_s + w $, allows us to write Eq.~\eqref{2le3} as
\begin{equation}
  \label{2le2}
  \left\langle \e^{\beta v} \right\rangle = 1\ .
\end{equation}
Recalling that the exponential function gives more weight to the positive fluctuations as compared with the negative ones, we conclude that, outside of the thermodynamically  reversible regime, the negative fluctuations of $v$ must be larger and/or more frequent than the positive ones. In other words: \emph{the violation of the second law is more rare than its satisfaction}. This asymmetry is also articulated by the infinite list of bounds for the moments of $v$ given in Eq. \eqref{eq:infbounds}.

Note that the Gibbs-stochastic condition of Eq. \eqref{eq:gibbs-stoch-qc} gives more information than the Jarzynski equation or second law equality as the number of constraints it imposes is given by the dimension of the final system. In fact, each condition can 
be thought of as a separate second law equality -- a situation which parallels the fact that one has many second laws for individual systems \cite{horodecki2013fundamental,brandao2013second}. This is related to the fact that
in the case with no weight, Gibbs-stochasity is equivalent to these additional second laws given by thermo-majorisation \cite{renes2014work}.

As a concrete and simple example of these conditions, let us take the case of Landauer erasure \cite{Landauer}.  We consider a qubit with $H_S=0$ that is initially in the maximally mixed state and which we want
to map to the $\ket{0}$ state.  Recalling that a positive work value represents a yield, while a negative work value is a cost, we consider a process such that $-w_0$ is the work cost  when erasing $\ket{0}\rightarrow \ket{0}$, 
and $-w_1$ the work cost if the transition $\ket{1}\rightarrow \ket{0}$ occurs. 
We allow for an imperfect process and imagine
that this erasure process happens with probability $1-\epsilon$, while with probability $\epsilon$ we have an error and either  $\ket{0}\rightarrow \ket{1}$ with work yield $\bar{w}_0$ or  
$\ket{1}\rightarrow \ket{1}$ with work yield $\bar{w}_1$. We call such a process {\it deterministic}, because $w$ is determined by the particular transition.

For this scenario, the the Generalised Gibbs-stochastic condition, Eq.~\eqref{eq:gibbs-stoch-qc}, gives two conditions
\begin{align}\label{eq:conditions2}
e^{\beta w_o}+e^{\beta w_1}&=1/(1-\epsilon)
\\
e^{\beta \bar{w}_o}+e^{\beta \bar{w}_1}&=1/\epsilon
\end{align}
We immediately see that to obtain perfect erasure, $\epsilon\rightarrow 0$, then when the erasure fails there must be work fluctuations which scale like $-T\log \epsilon$. Such a work gain happens rarely, but precludes
perfect erasure, and is related to the third law proven in \cite{masanes2014derivation} and is discussed in detail in \cite{richens2016quantum}.

\begin{figure}
\centering
\includegraphics[width=1\columnwidth]{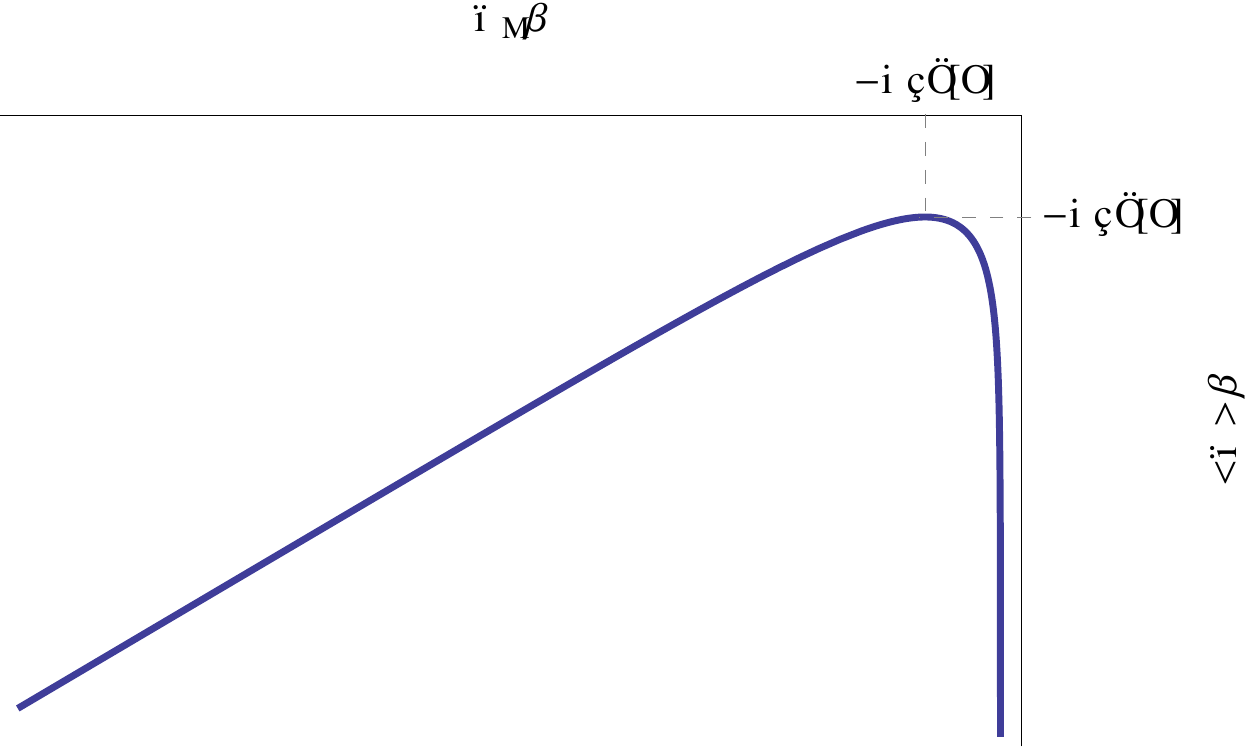}
\caption{ As a simple example of the second law equality, one can think of single qubit erasure. In the limit of perfect erasure, the 2nd law equality reads in this case $e^{\beta w_0}+e^{\beta w_1}=1$, and the average work spent is $\langle w \rangle =\frac{1}{2}(w_0+w_1)$. In the figure, we show the tradeoff between $w_0$ and $\langle w \rangle $ for such perfect erasure. The optimal work value for erasure is the usual Landauer cost at $w_0=w_1=-T \log{2}$. As seen in Eq. \eqref{eq:conditions2}, perfect or near-perfect erasure requires the work cost to fluctuate arbitrarily.
 } \label{fig:1bitCGS}
\end{figure}

In the limit of perfect erasure
we illustrate the work fluctuations in Fig. \ref{fig:1bitCGS}. We easily see that the minimal average work cost of erasure, $T\log 2$, is obtained when the work fluctuations associated with successful erasure are minimal. We also see that no work, not even probabilistically, can be obtained in such a deterministic process.
Since  Eq. \eqref{eq:gibbs-stoch-qc} is not only necessary but also sufficient, we can achieve these work distributions just through the very simple operations described in~\cite{perry2015sufficient}. 
Through this example, one sees that the identities proven here can lead to new insights in thermodynamics, particularly with respect to work fluctuations and their quantum aspects.

 \noindent
{\bf Acknowledgments:} We would like to thank Gavin Crooks for interesting comments on v1 of this paper.  We are grateful to the EPSRC and Royal Society for support. CP acknowledges financial support from the European Research Council (ERC Grant Agreement no 337603).

\bibliographystyle{apsrev4-1}

\clearpage
\widetext
\appendix
\section{Implementing arbitrary unitaries on system and bath} \label{app:equiv}

In this Appendix we show how more traditional derivations of fluctuation theorems, and in particular the Tasaki-Crooks fluctuation theorem \cite{tasaki2000jarzynski,talkner2009fluctuation}, can be obtained within our framework. There, one allows an arbitrary unitary operation on the system and heat bath, and performs an energy measurement before and after this unitary. The difference in energy between the initial and final state is taken to be the amount
of work extracted or expended. Other derivations assume some particular master equation (e.g. Langevin dynamics), which can be thought of as being generated by some particular family of unitaries. On the other hand, we work with Thermal Operations (TO) where we explicitly include a work system (a weight) and
only allow unitaries which conserve the total energy of system, bath and weight. Here, we show that the former paradigms are contained in the one we consider here.

In order to do this, we take the state of the
weight to have coherences over energy levels, which allows us to implement arbitrary unitaries on system and bath. While this is also shown
in \cite{brandao2011resource}, here, we clarify a number of issues in the context of fluctuation relations. We should note that coherences in the weight are only needed if we wish to explicitly model a unitary which creates coherences over energy levels. However, since 
fluctuation theorem results typically require that the initial and final state of the system is measured in the energy eigenbasis, we could consider only unitaries which don't create coherence. Nonetheless, for greater generality, we describe how to implement an arbitrary unitary.

To do this, we show that given the three fundamental constraints we imposed on our allowed operations in Section \ref{sec:generalized}, (unitarity, energy conservation, and independence on the state of the weight) we can give a characterization of the unitary transformation on system and bath. As a consequence, we find that arbitrary unitaries on system and bath can be implemented, and we then show how to obtain the distribution on the work system.

In what follows, it is useful to denote the eigenvectors of the generator of the translations on the weight $\Delta_W = \int\! dt\, t\, \proj t$ by $\ket t$. 
The following result shows that when implementing an arbitrary unitary, the dynamics of the weight is fully constrained, and that the remaining freedom is implicitly characterized by a system-bath unitary:
\begin{lemma}\label{lemma1}
A map $\Gamma_{SBW}$ obeys the three constraints of Section \ref{sec:generalized} (unitarity, energy conservation, and  independence on the state of the weight)  if and only if there is an arbitrary system-bath unitary $V_{SB}$ such that the global unitary on system, bath and weight can be written as
\begin{eqnarray*}
  U_{SBW}
  &=&
  \e^{\i (H'_S +H_B) \otimes \Delta_W}\! 
  \left( V_{SB} \otimes \id_W \right) 
  \e^{-\i (H_S +H_B) \otimes \Delta_W}
  \\ &=&
  \int\! dt \,
  A_{SB} (t) \otimes \proj t
  \ ,  
\end{eqnarray*}
where we define the family of unitaries
\begin{equation*}
  A_{SB} (t) = 
  \e^{\i t (H'_S+H_B)} V_{SB}\,
  \e^{-\i t (H_S+H_B)}\ .
\end{equation*}
\end{lemma}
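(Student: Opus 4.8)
The plan is to prove both implications, treating the reverse (``if'') direction as a direct verification and the forward (``only if'') direction as the substantive part. For the ``if'' direction I would take $U_{SBW}$ of the stated form with $V_{SB}$ an arbitrary system--bath unitary and check the three constraints of Section~\ref{sec:generalized} in turn. Unitarity is immediate, since $U_{SBW}$ is a product of the unitaries $\e^{\pm\i(H'_S+H_B)\otimes\Delta_W}$ and $V_{SB}\otimes\id_W$. Independence of the position of the weight, $[U_{SBW},\Delta_W]=0$, is also immediate: both exponentials are functions of $\Delta_W$ and $V_{SB}\otimes\id_W$ acts trivially on the weight. Energy conservation, Eq.~\eqref{EC}, is the only real computation: I would use the canonical pair $[\Delta_W,H_W]=\i$ to show that conjugation by $\e^{\i A\otimes\Delta_W}$ shifts $\id\otimes H_W$ by a term linear in the system--bath operator $A$ (higher commutators vanishing because the shift is a c-number on the weight, so $[A\otimes\Delta_W,A\otimes\id_W]=0$), and then conjugate $H_S+H_B+H_W$ through the three factors; the shifts produced by the two exponentials combine to replace $H_S$ by $H'_S$, turning $H_S+H_B+H_W$ into $H'_S+H_B+H_W$, which is exactly Eq.~\eqref{EC}.

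For the ``only if'' direction I would proceed in four steps. First, since $\Delta_W$ is canonically conjugate to $H_W$ it has simple, purely continuous spectrum $\mathbb{R}$ with one-dimensional generalised eigenspaces $\ket t$; the constraint $[U_{SBW},\Delta_W]=0$ then forces $U_{SBW}$ to be block-diagonal in this basis, i.e.\ $U_{SBW}=\int dt\,A_{SB}(t)\otimes\proj t$ for some family of system--bath operators $A_{SB}(t)$. Second, unitarity $U_{SBW}^\dagger U_{SBW}=\id_{SBW}$ reads $\int dt\,A_{SB}(t)^\dagger A_{SB}(t)\otimes\proj t=\id_{SBW}$, so each $A_{SB}(t)$ is a partial isometry and, together with $U_{SBW}U_{SBW}^\dagger=\id_{SBW}$, is unitary. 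Third, I would impose energy conservation: writing $H_W$ in the $\ket t$ representation as its conjugate generator (a derivative $\partial_t$ up to sign), Eq.~\eqref{EC} becomes a first-order linear ODE $\tfrac{d}{dt}A_{SB}(t)=\i\big[(H'_S+H_B)A_{SB}(t)-A_{SB}(t)(H_S+H_B)\big]$, with the sign set by the convention $[\Delta_W,H_W]=\i$, whose unique solution with $A_{SB}(0)=:V_{SB}$ is $A_{SB}(t)=\e^{\i t(H'_S+H_B)}V_{SB}\,\e^{-\i t(H_S+H_B)}$, the stated family. Fourth, I would reassemble the integral: because $\e^{\pm\i(H'_S+H_B)\otimes\Delta_W}$ acts on $\proj t$ by multiplication by $\e^{\pm\i t(H'_S+H_B)}$ (and likewise for $H_S$), the family $\int dt\,A_{SB}(t)\otimes\proj t$ is precisely $\e^{\i(H'_S+H_B)\otimes\Delta_W}(V_{SB}\otimes\id_W)\e^{-\i(H_S+H_B)\otimes\Delta_W}$, with $V_{SB}=A_{SB}(0)$ an arbitrary system--bath unitary.

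The main obstacle I anticipate is the third step: energy conservation couples the system--bath degrees of freedom to the continuous weight, and making precise the action of the unbounded conjugate operator $H_W$ as a derivative on the $\ket t$-labelled family $A_{SB}(t)$ --- including domain questions and the distributional nature of the eigenstates $\ket t$ --- requires care. Once that differentiation is set up correctly the ODE and its solution are routine. A minor auxiliary point worth checking is that the reconstructed $V_{SB}=A_{SB}(0)$ is genuinely unconstrained, since that is exactly what yields the advertised consequence that arbitrary unitaries on system and bath can be implemented within the allowed class of operations.
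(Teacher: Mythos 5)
Your proposal is correct and takes essentially the same route as the paper: both proofs use $[U_{SBW},\Delta_W]=0$ to decompose the unitary over the spectrum of $\Delta_W$, turn energy conservation into the first-order equation $H'_C A_C(t) = A_C(t)H_C - \i\,\partial_t A_C(t)$ (writing $C=SB$), solve it as $A_C(t)=\e^{\i t H'_C}V_C\,\e^{-\i t H_C}$ with $V_C$ arbitrary, and then use unitarity of $U_{SBW}$ to force $V_C$ to be unitary. The differences are only presentational: the paper reaches that ODE by first expanding $U_{SBW}$ in the operators $\e^{\i \mathcal{E}\Delta_W}$, deriving the algebraic intertwining relation $H'_C A_C(\mathcal{E})=A_C(\mathcal{E})(H_C+\mathcal{E})$, and Fourier transforming, whereas you write $H_W$ directly as a derivative in the $\ket{t}$-representation; your explicit verification of the ``if'' direction (left implicit in the paper) and your flagged sign ambiguity (which the paper shares, since its stated convention $[\Delta_W,H_W]=\i$ is applied loosely in its own proof) are both consistent with the paper's treatment.
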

\begin{proof}

Most of the following arguments do not exploit the system-bath partition. Hence, in order to simplify the expressions, we jointly call them ``composite" $C=SB$, as in $H_C = H_S +H_B$ or $\rho_{CW} =\rho_{SBW}$.
We impose the three fundamental assumptions on the global unitary $U_{CW}$. We start by imposing the {\em independence of the ``position" of the weight.} For this, we note that the only operators which commute with $\Delta_W$ are the functions of itself, $f(\Delta_W)$, and that a complete basis of these functions are the imaginary exponentials $\e^{\i \E\Delta_W}$. Hence, the condition $[U_{CW}, \Delta_W]=0$ implies
\begin{equation}
  \label{first form}
  U_{CW} = \int\! d\E\, A_C(\E ) 
  \otimes \e^{\i \E  \Delta_W}\ ,
\end{equation}
where $A_C (\E )$ with $\E \in \mathbb R$ is a one-parameter family of operators.

Next, we impose {\em energy conservation} 
\begin{equation}
  U_{CW} (H_C +H_W) = (H'_C +H_W) U_{CW}
  \ .
\end{equation}
Note that the equation $[H_W, \Delta_W] = \i$ implies that $[H_W, \e^{\i \E  \Delta_W}] = -\E \, \e^{\i \E  \Delta_W}$ and 
\begin{equation*}
  \int\! d\E 
  \left( A_C(\E ) H_C - H'_C A_C(\E) 
  +\E A_C(\E) \right)
  \otimes \e^{\i \E  \Delta_W}
  =0 \ .
\end{equation*}
This and the linear independence of the operators $\e^{\i \E \Delta_W}$ gives
\begin{equation}
  H'_C A_C(\E )
  =
  A_C(\E) \left(H_C + \E \right)
  \ ,
\end{equation}
for all $\E \in \mathbb R$.
If we translate this equation to Fourier space using
\begin{equation}
  \label{FSA}
  A_C(\E ) = \frac{1}{2 \pi} \int\! dt \, \e^{-\i \E t} A_C(t)
  \ ,
\end{equation}
we obtain
\begin{equation}
  \label{ECF}
  H'_C A_C(t)
  =
  A_C(t) H_C -\i\, \partial_t A_C(t)
  \ .
\end{equation}
The solutions of this differential equation are
\begin{equation}
  \label{solution}
  A_C(t) 
  = 
  \e^{\i t H'_C} V_C\, \e^{-\i t H_C}
  \ ,
\end{equation}
where $V_C$ is arbitrary.

Finally, we impose {\em unitarity} $U_{CW} U_{CW}^\dagger = \id_{CW}$. That is
\begin{equation}
  \id_C \otimes \id_W 
  = 
  \int\! d\E' d\E \, A_C(\E' ) A_C^\dagger (\E)
  \otimes \e^{\i (\E' -\E) \Delta_W}
  \ .
\end{equation}
Using the linear independence of $\e^{\i \E \Delta_W}$ we obtain
\begin{equation}
  \label{dI}
  \int\! d\E \, A_C(\E ) A_C^\dagger (\E +E)
  = \id_{C}\, \delta(E) \ ,
\end{equation}
for all $E\in \mathbb R$. 
If we translate this equation to Fourier space using Eq.~\eqref{FSA} we get $A_C(t) A_C^\dagger (t) = \id_C$, which implies $V_C V_C^\dagger = \id_C$.

Substituting Eq.~\eqref{solution} into Eq.~\eqref{first form} gives
\begin{eqnarray}
  U_{CW} 
  &=& \nonumber 
  \int\! dE\, dt\, \e^{-\i E t} A_C (t)
  \otimes \e^{\i E \Delta_W}
  \\ \label{t form} &=& 
  \int\! dt\, A_C (t)
  \otimes \proj{t}\ .
\end{eqnarray}

An equivalent form can be obtained by using the eigen-projectors of $H_C = \int\! d\E\, \E\, P_\E$ and $H'_C = \int\! d\E' \E' P_{\E'}$. That is
\begin{eqnarray}
  U_{CW} &=& \nonumber 
  \int\! d\E' d\E\, dE\, dt\, 
  \e^{\i (\E' -\E -E) t} 
  \left[ P_{\E'} V_C P_{\E} \right]
  \otimes \e^{\i E \Delta_W}
  \\ &=& \label{third form}
  \int\! d\E' d\E 
  \left[ P_{\E'} V_C P_{\E} \right]
  \otimes \e^{\i (\E'-\E) \Delta_W}
  \\ &=& \label{fourth form}
  \e^{\i H'_C \otimes \Delta_W} A_C\, 
  \e^{-\i H_C \otimes \Delta_W}
  \ .
\end{eqnarray}
If the spectra of $H_C$ and $H'_C$ are discrete, $H_C = \sum_c \E_c \proj{c}$ and $H'_C = \sum_{c'} \E_{c'} \proj{c,}$, then we can write the above as
\begin{equation}
  U_{CW} = \sum_{c',c} 
  \proj{c'} V_C \proj{c}
  \otimes \e^{\i (\E_{c'}-\E_{c}) \Delta_W}
  \ .
\end{equation}

\end{proof}
We stress that there is no constraint on $V_{C}$. 
This type of unitary was used in the context of thermodynamics in ~\cite{brandao2011resource,aberg2014catalytic}.
The above result allows one to obtain an explicit form for the effective map on system-bath (after tracing out the weight)
\begin{eqnarray}
  \nonumber
  \Gamma_{SB} (\rho_{SB}) 
  &=&
  {\rm tr}_W\! \left( U_{SBW} \rho_{SB} \otimes \rho_W U_{SBW}^\dagger \right)
  \\ \label{mix U} &=&
  \int\! dt\, 
  A_{SB} (t) \rho_{SB} A_{SB}^\dagger (t)\,
  \bra t \rho_W \ket t
  \ . \label{eq:unital}
\end{eqnarray}
By noting that $\bra t \rho_W \ket t$ is a probability distribution, we see that the reduced map on system and bath $\Gamma_{SB}$ is a mixture of unitaries (Result~1 in~\cite{masanes2014derivation}). Hence the transformation can never decrease the entropy of system and bath, which 
guarantees that one cannot pump entropy into the weight, which would be a form of cheating. 

Eq.~\eqref{mix U} also implies that, if the initial state of the weight $\rho_W$ is an eigenstate of $\Delta_W$, then the mixture of unitaries only has one term, so that
\begin{result}
If the weight is in a maximally coherent state, that is, an eigenstate of $\Delta_W$ with eigenvalue $t'$, the effect on system and bath is an arbitrary unitary
\begin{equation}
\Gamma_{SB} (\rho_{SB}) =A_{SB}(t') \rho_{SB} A_{SB}(t')^\dagger,
\end{equation}
where
\begin{equation}
A_{SB}(t')=  \e^{\i t' H'_S+H_B} V_{SB}\, \e^{-\i t' H_S+H_B},
\end{equation}
and $V_{SB}$ is as defined in Lemma \ref{lemma1}.
\end{result}
\begin{proof}
In the particular case where of Eq. \eqref{eq:unital} where $\rho_W=\ketbra{t'}{t'}$ is an eigenstate of of $\Delta_W$ so that $\Delta_W \ket{t'}=t' \ket{t'}$, the integral is then 
\begin{align}
 \Gamma_{SB} (\rho_{SB}) &=   \int\! dt\,  A_{SB} (t) \rho_{SB} A_{SB}^\dagger (t)\, \delta(t-t') \nonumber
 \\ &= A_{SB}(t') \rho_{SB} A_{SB}(t')^\dagger.
\end{align}
\end{proof}
 That is, even though this effective map involves tracing out the weight, the result on system-bath is unitary. In addition, this unitary is totally unconstrained, and in particular, it need not be energy-conserving. A typical form for this unitary is $\mathcal T \exp{[\int\! dt\, H_{SB} (t)]}$, where $\mathcal T$ is the time-order operator. 

In summary, thermal operations with fluctuating work can simulate general unitary transformation which do not preserve energy. Hence, statements such as Results \ref{R1}, \ref{re:qsl} and \ref{re:qje} that apply to the first type also apply to the second type. This way we include the operations of the usual derivations of Tasaki-Crooks fluctuation theorems \cite{tasaki2000jarzynski,talkner2009fluctuation}, where arbitrary unitaries on system and bath are allowed.
 
We now outline how one may derive the analogue of fluctuation relations such as Result \ref{R6} in this case, given any unitary dynamics, or mixtures of them. There, the work extracted from system and bath is quantified by measuring their energy before and after the transformation, such that the work takes the form of the random variable $\E'-\E$, where $\E$ is a system+bath energy associated with projector $P_{\E}$. 
The conditional distribution $P(\E' |\E) = {\rm tr}_S\! \left[ P_{\E'} \Gamma_{SB} (P_\E \rho_{SB} P_\E) \right]$ plays the key role. It is known that if the map $\Gamma_{SB}$ is unital (as guaranteed by Eq. \eqref{eq:unital}), then the matrix $P(\E' |\E)$ is doubly-stochastic, and the Jarzynski equality holds, as
\begin{equation} \label{eq:tpm}
\langle e^{-\beta w} \rangle
= \sum_{\E, \E'} e^{\beta (\E-\E')} P(\E' |\E) \frac{e^{-\beta \E}}{Z_{S}Z_B}=\sum_{\E'} \frac{e^{-\beta \E'}}{Z_{S}Z_B}   \sum_{\E} P(\E' |\E)=\sum_{\E'} \frac{e^{-\beta \E'}}{Z_{S}Z_B}=\frac{Z'_S}{Z_S}.
\end{equation}
Other relations, such as the 2nd law equality or Crooks theorem, can be derived analogously too.   Essentially, the double-stochasticity of $P(\E' |\E)$ plays the role of Eq. \eqref{eq:gibbs-stoch-qc} in the derivations of the fluctuation theorems.

Eq. \eqref{eq:tpm} works independently of the state of the weight. In particular, it can be a coherent state $\ket {t'}$, which will make the reduced map on the system unitary.
Hence, the average of the work extracted from system and bath can be equivalently expressed in terms of: (i) measurements on their energy or (ii) shifts in the weight.

An important caveat of the results of this section is that they require the weight to be in a coherent state $\ket{t'}$. While exactly attaining this state is physically impossible, arbitrarily good approximations are possible in principle, allowing for the implementation of maps arbitrarily close to unitary. Here we are showing that when one wants to implement arbitrary unitaries, coherence (understood as a thermodynamical resource) is needed. However, as we have already noted in Section \ref{sec:classical} in the main text, implementing a unitary which merely maps energy eigenstates to energy eigenstates, requires no such coherence. 

Finally, in a different direction, there is a further set of operations that we can include within our framework, and in particular in Result~\ref{re:CGS}. A large part of the literature on resource-theoretic approaches to thermodynamics has been built around a set of operations consisting of sequences of transformations of the energy levels (with an associated work cost), and thermalizations between system and bath. Examples of this are \cite{aaberg2013truly,egloff2015measure,perry2015sufficient}. 
On one hand, the thermalization processes are those for which the work cost vanishes, and consist of a stochastic process (possibly between only two levels) for which we have the following constraint
\begin{equation}
  \sum_{s} P(s'|s)\, 
  \e^{\beta (E_{s'}-E_s)}
  =  1\ ,
\end{equation}
which is a particular case of Eq.   \ref{eq:gibbs-stoch-qc}
when we take $w=0$ (note that for such thermalization processes the system Hamiltonian remains unchanged).

The level transformation processes, on the other hand, consist of a change of Hamiltonian that leaves the populations of the energy levels invariant $\left(\rho,H_S\right)\rightarrow\left(\rho,H'_S\right)$. Hence, these correspond to stochastic matrices of the form $P\left(s',w|s\right)=\delta_{s, s'}\delta_{E_s-E_{s'},w}$. It can be easily seen that a process like this satisfies Eq.~\eqref{eqR4}. The values of the work distribution that occur in this process are given by the difference between initial and final energy levels, as expected.

\end{document}